\newcommand{\xRightarrow}[2][]{\ext@arrow 0359\Rightarrowfill@{#1}{#2}}
\tikzstyle{envplace}=[circle,thick,draw=black!75,fill=white,minimum size=6mm]
\tikzstyle{sysplace}=[circle,thick,draw=black!75,fill=black!20,minimum size=6mm]
\tikzstyle{transition}=[rectangle,thick,draw=DarkBlue!75,fill=DarkBlue!20,minimum size=4mm]
\tikzstyle{bad}=[double=black!20]
\tikzstyle{badstate}=[pattern=checkerboard, fill opacity=0.1, draw opacity=1, draw=black, text opacity=1]
\tikzstyle{subgame}=[rectangle,thick,draw=DarkBlue!90,minimum size=25mm, line width = 0.5]
\tikzstyle{frame}=[rectangle, thick, draw=blue!50, minimum width = 120mm, minimum height=25mm, rounded corners=0.5cm, line width=0.8mm]
\tikzstyle{frame1}=[rectangle, thick, draw=orange!50, minimum width = 120mm, minimum height=40mm, rounded corners=0.5cm, line width=0.8mm]
\tikzstyle{draw1}=[draw=none]
\begin{document}

\title{Efficient Trace Encodings of Bounded Synthesis for Asynchronous Distributed Systems\thanks{This work was supported by the German Research Foundation (DFG) Grant Petri Games (392735815) and the Collaborative Research Center “Foundations of Perspicuous Software Systems” (TRR 248, 389792660), and by the European Research Council (ERC) Grant OSARES (683300).}}
\titlerunning{Efficient Trace Encodings for Asynchronous Distributed Systems}

\author{Jesko Hecking-Harbusch \and Niklas O.\ Metzger}
\authorrunning{J.\ Hecking-Harbusch \and N.O.\ Metzger}

\institute{
Saarland University, Saarbr\"ucken, Germany
}

\maketitle

\begin{abstract}
	The manual implementation of distributed systems is an error-prone task because of the asynchronous interplay of components and the environment. 
	Bounded synthesis automatically generates an implementation for the specification of the distributed system if one exists. 
	So far, bounded synthesis for distributed systems does not utilize their asynchronous nature. 
	Instead, concurrent behavior of components is encoded by all interleavings and only then checked against the specification. 
	We close this gap by identifying true concurrency in synthesis of asynchronous distributed systems represented as Petri games. 
	This defines when several interleavings can be subsumed by one true concurrent trace. 
	Thereby, fewer and shorter verification problems have to be solved in each iteration of the bounded synthesis algorithm. 
	For Petri games, experimental results show that our implementation using true concurrency outperforms the implementation based on checking all interleavings. 
\end{abstract}

\section{Introduction}\label{sec:introduction}

One ambitious goal in computer science is the automatic generation of programs. 
For a given specification, a \emph{synthesis algorithm} either generates a program satisfying the specification or determines that no such program exists. 
Nowadays, most synthesis tools deploy a game-theoretic view on the problem~\cite{DBLP:conf/cav/JobstmannGWB07,DBLP:conf/tacas/Ehlers11,DBLP:conf/cav/BohyBFJR12,DBLP:conf/tacas/FaymonvilleFRT17}. 
The synthesis of \emph{distributed systems}~\cite{DBLP:conf/icalp/PnueliR89} can be represented by a team of system players and a team of environment players playing against each other. 
Each system player acts on individual information and requires a local strategy, which in combination with the strategies of the other system players satisfies an objective against the decisions of the team of environment players. 
The environment players can cooperate to prevent the satisfaction of the objective by the system players. 
In the \emph{synchronous} setting where all players progress at the same rate, the synthesis problem for distributed systems is undecidable~\cite{DBLP:conf/focs/PnueliR90,DBLP:conf/lics/FinkbeinerS05}.

\emph{Petri games} represent \emph{asynchronous} behavior in the synthesis of distributed systems where processes can progress at individual rates between synchronizations. 
Furthermore, the players of the team of system players have \emph{causal memory}, i.e., a system player can base decisions on its local past and the local past of all other players up to their last synchronization. 
The synthesis problem for Petri games is decidable if for a maximum of one for the number of system players or the number of environment players~\cite{petrigamessynthesis,DBLP:conf/fsttcs/FinkbeinerG17}. 
If the restrictions on the team size cannot be met, \emph{bounded synthesis}~\cite{boundedsynthesis} is applied to incrementally increase the memory of possible system strategies until a winning one is found.

Each iteration of the bounded synthesis algorithm for Petri games~\cite{Finkbeiner15} checks the existence of a winning system strategy with bounded memory by simulating the resulting Petri game. 
This simulation is represented as all \emph{interleavings} of fired transitions allowed by possible system strategies.
For two independent decisions, it makes no difference whether one decision or the other is scheduled first.
It suffices to only check one scheduling where both decisions happen \emph{true concurrently}.
The true concurrent scheduling not only considers fewer schedulings but also shorter ones.
Furthermore, the true concurrent scheduling enables us to refine the detection of loops in bounded synthesis.
This results in a considerable speed-up of the verification part of bounded synthesis for Petri games. 

To identify true concurrency, we introduce \emph{environment strategies} for Petri games which explicitly represent the decisions of environment players.
Environment strategies restrict a given system strategy and try to reach markings which prove the system strategy to \emph{not} be winning.
We present how the explicit environment decisions of environment strategies allow the firing of maximal sets of true concurrent transitions while preserving the applicability to bounded synthesis.
This requires some \emph{stalling} options for the environment.
For bounded synthesis, we encode the assumptions on system and environment strategies as well as the winning objective of Petri games as \emph{quantified Boolean formula} (QBF). 
We compare the implementations of the \emph{sequential encoding} based on all interleavings and our new \emph{true concurrent encoding} on an extended set of benchmarks\footnote{The sequential and the concurrent encoding can be tested online as part of the \adam\ toolkit~\cite{ADAM}: \url{https://www.react.uni-saarland.de/tools/online/ADAM/}}.
Our experimental results show that the true concurrent encoding outperforms the sequential encoding by a considerable margin. 

The key contributions of this paper are the following:
\begin{itemize}
	\item We develop the theoretical foundation of \emph{true concurrency} of components in synthesis for asynchronous distributed systems by representing environment decisions explicitly in \emph{environment strategies} of Petri games.
	\item We prove that environment strategies \emph{preserve existence of winning system strategies} and encode them as QBFs for bounded synthesis for Petri games.
	\item We \emph{implement} the true concurrent encoding and show considerable improvements against the sequential encoding on an extended benchmark set.
\end{itemize}

The paper is structured as follows: In Section~\ref{sec:motivation}, we give an intuitive introduction to Petri games and the benefits of true concurrent scheduling for bounded synthesis for Petri games. 
Section~\ref{sec:background} recalls the required background on Petri nets, Petri games, and bounded synthesis. 
In Section~\ref{sec:trueCon}, we introduce environment strategies and prove that they preserve the existence of winning strategies.
Section~\ref{sec:qbf} gives the true concurrent encoding formally as QBF. 
Section~\ref{sec:implementation} surveys experimental results for the implementation of the true concurrent encoding.

\section{Example of Bounded Synthesis for Petri Games}\label{sec:motivation}

	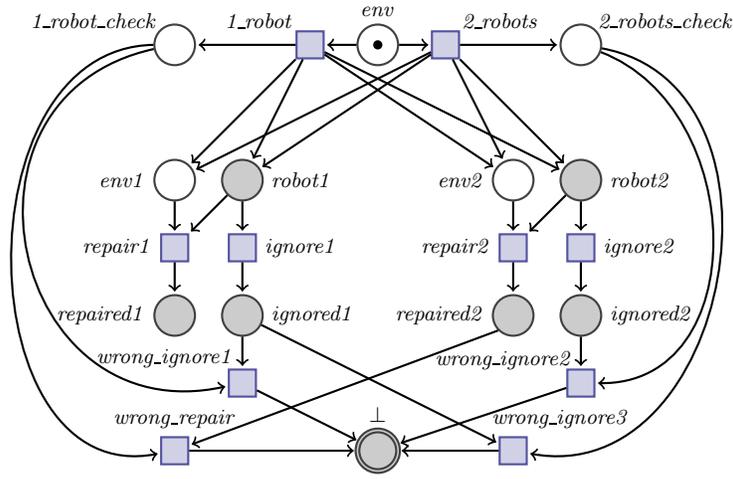
\begin{figure}[t]
	\centering

	\begin{tikzpicture}[scale=0.9,every node/.style={transform shape}]
	\node [envplace] (env) [tokens = 1, label=above:\textit{env}]{};
	
	
	\node [envplace] (e1) [left of = env, left of=env, left of = env, below of = env, below of = env, label=left:\textit{env1}]{};
	\node [envplace] (e2) [right of=env,  right of = env, below of = env, below of = env, label=left:\textit{env2}]{};
	\node [sysplace] (s1) [ right of = e1, label=right:\textit{robot1}]{};
	\node [sysplace] (s2) [right of = e2, label=right:\textit{robot2}]{};
	
	\node [sysplace] (done11) [below of = s1, below of = s1, label =  right:\textit{ignored1}]{};
	\node [sysplace] (done12) [below of = s1, below of = s1,  left of = s1, label = left:\textit{repaired1}]{};
	\node [sysplace] (done21) [below of = s2, below of = s2,  label = right:\textit{ignored2}]{};
	\node [sysplace] (done22) [below of = e2, below of = e2,  label = left:\textit{repaired2}]{};
	
	\node [envplace] (tcheck1) [left of= env, left of= env, left of = env, label={[label distance =-0.12cm]above left:\textit{1\_robot\_check}}]{};

	\node [envplace] (tcheck2) [right of= env, right of= env, right of = env, label={[label distance=-0.12cm]above right:\textit{2\_robots\_check}}]{};

	
	\node [sysplace, bad] (bad) [below of = done11, below of = done11, right of = done11,  right of = done11, label = above:$\bot$]{};
	
	\node [transition](t1) [left of= env, label={[label distance=-0.12cm]above left:\textit{1\_robot}}]{};
	\node [transition](t2) [right of=env, label={[label distance=-0.12cm]above right:\textit{2\_robots}}]{};

	\node [transition] (tdec11) [below of = s1, label =  right:\textit{ignore1}]{};
	\node [transition] (tdec12) [below of = s1, left of= s1, label = left:\textit{repair1}]{};
	
	\node [transition] (tdec21) [below of = s2, label =  right:\textit{ignore2}]{};
	\node [transition] (tdec22) [below of = e2, label = left:\textit{repair2}]{};

	\node [transition] (tbad3) [ below of= done11, label = {[label distance=-0.1cm]95:\textit{wrong\_ignore1}}]{};
	\node [transition] (tbad4) [ below of= done21, label ={[label distance=-0.12cm]95:\textit{wrong\_ignore2}}]{};
	\node [transition] (tbad5) [ below of = done12, below of= done12, label = above :\textit{wrong\_repair}]{};
	\node [transition] (tbad6) [ below of = done22, below of= done22, label = {[xshift=0.7cm]above:\textit{wrong\_ignore3}}]{};

		
	\path[->, thick] (t1) 
	edge [pre]  (env)
	edge [post] (tcheck1)
	edge [post] (s2)
	edge [post] (s1)
	edge [post] (e2)
	edge [post] (e1);
	
	\path[->, thick] (t2) 
	edge [pre]  (env)
	edge [post] (tcheck2)
	edge [post] (s2)
	edge [post] (s1)
	edge [post] (e1)
	edge [post] (e2);
	
	\path[->, thick] (tdec11) 
	edge [pre] (s1)
	edge [post] (done11);
	
	\path[->, thick] (tdec12) 
	edge [pre]  (e1)
	edge [pre] (s1)
	edge [post] (done12);
	
	\path[->, thick] (tdec21) 
	edge [pre] (s2)
	edge [post] (done21);
	
	\path[->, thick] (tdec22) 
	edge [pre]  (e2)
	edge [pre] (s2)
	edge [post] (done22);

	\path [->, thick] (tbad3)
	edge [pre] (done11)
	edge [pre, looseness = 1.65, bend left, in = 90, out = 90] (tcheck1)
	edge [post] (bad);
	
	\path [->, thick] (tbad4)
	edge [pre] (done21)
	edge [pre, looseness = 1.22, bend right, in = 250, out = 270] (tcheck2)
	edge [post] (bad);

	\path [->, thick] (tbad5)
	edge [pre] (done22.225)
	edge [pre, looseness = 1.22, bend left, in = 90, out = 100] (tcheck1)
	edge [post] (bad);
		
	\path [->, thick] (tbad6)
	edge [pre] (done11)
	edge [pre, looseness = 1.28, bend right, in = 270, out = 270] (tcheck2)
	edge [post] (bad);
	\end{tikzpicture}
	\caption{This Petri game specifies a production line where two robots can repair a product. The product either requires repair by only one or by both robots.
	}
	\label{fig:ProductionLine}
\end{figure}

Figure \ref{fig:ProductionLine} illustrates how Petri games represent the synthesis problem of asynchronous distributed systems and how true concurrency simplifies bounded synthesis for Petri games. 
This Petri game specifies a production line for repairing a product.
The different possible requirements for repair are modeled as choices of the environment. 
The product can either require repair by a single robot or by both robots concurrently.
These robots are represented by system players and have to collectively meet the requirement of the product.

Petri games are based on an underlying Petri net and distribute the places into two disjoint sets for the team of environment players and for the team of system players. 
White places belong to the environment and represent the product and its requirements for repair.
Gray places belong to the system and represent the robots of the production line. 
The players are represented as tokens and their team is determined by the type of the place they are residing in. 
Initially, there is one token in the place \textit{env} representing an environment player.
Transitions define the flow of tokens through the Petri game as in Petri nets. 
When all places before a transition contain a token, then this transition is enabled. 
Firing an enabled transition consumes the tokens in all places before the transition and produces tokens in all places after it. 
The firing of enabled transition \textit{1\_robot} results in a consumption of the token in \textit{env} and the production of tokens in places \textit{1\_robot\_check}, \textit{env1}, \textit{robot1}, \textit{env2}, and \textit{robot2}.
By this transition, both robots are started and it is required that only the first one repairs a part of the product.

The winning objective of the game is represented by the bad place $\bot$.
The team of system players has to avoid reaching this place for all choices of the environment.
Based on its causal past, a system player can decide which outgoing transitions to fire. 
For example, the system place \textit{robot2} can either be reached via transition \textit{1\_robot} or via \textit{2\_robots} and then the player can decide in both cases independently between transitions \textit{repair2} and \textit{ignore2}.
Deciding independently is necessary because if the environment has chosen \textit{1\_robot}, no repair by the second robot is allowed whereas if the environment has chosen \textit{2\_robots}, repair by the second robot is required.
The winning system strategy is presented in Fig.~\ref{fig:ProductionLineStrategy} where primed places and transitions result from different causal pasts.
The outgoing transitions \textit{ignore2} of place \textit{robot2} and \textit{repair2'} of \textit{robot2'} represent the necessary different decisions of the system.
Notice that the bad place is not reachable based on the decisions in the winning system strategy.

	\begin{figure}[t]
	\centering
	
	\begin{tikzpicture}[scale=0.9,every node/.style={transform shape}]
	\node [envplace] (env) [tokens = 1, label=above:\textit{env}]{};
	
	
	\node [envplace] (e1) [left of = env, left of = env,  left of=env, left of = env,   below of = env, label=left:\textit{env1}]{};
	\node [envplace] (e2') [right of=env,   right of = env, right of = env,  below of = env, label= {below:\textit{env2'}}]{};
	\node [sysplace] (s1) [right of = e1, label=below:\textit{robot1}]{};
	\node [envplace] (e2) [right of = s1,  label=below:\textit{env2}]{};
    \node [sysplace] (s2) [right of = e2,  label={[label distance=-0.15cm]below right:\textit{robot2}}]{};
	\node [sysplace] (s2') [right of = e2', label=right:\textit{robot2'}]{};
	\node [sysplace] (s1') [left of = s2', left of = s2', label=below:\textit{robot1'}]{};
	\node [envplace] (e1') [left of = s1', label={[label distance=-0.15cm]above left:\textit{env1'}}]{};
	
	\node [sysplace] (done12) [below of = s1, below of = s1,  left of = s1, label =left:\textit{repaired1}]{};
	\node [sysplace] (done21) [below of = s2, below of = s2,  label = {[label distance=-0.05cm]200:\textit{ignored2}}]{};
	
	\node [sysplace] (done12') [below of = s1', below of = s1',  left of = s1', label = {[label distance=-0.12cm]320:\textit{repaired1'}}]{};
	\node [sysplace] (done22') [below of = s2', below of = s2',   label = right:\textit{repaired2'}]{};
	
	\node [envplace] (tcheck1) [left of= env, left of= env, left of = env, label=left:\textit{1\_robot\_check}]{};

	\node [envplace] (tcheck2) [right of= env, right of= env, right of = env, label=right:\textit{2\_robots\_check}]{};


	\node [transition](t1) [left of= env, label={[label distance = -0.12cm]above left:\textit{1\_robot}}]{};
	\node [transition](t2) [right of=env, label={[label distance= -0.12cm]above right:\textit{2\_robots}}]{};

	\node [transition] (tdec12) [below of = s1, left of= s1, label = left:\textit{repair1}]{};
	
	\node [transition] (tdec21) [below of = s2, label =  {[label distance=-0.05cm]200:\textit{ignore2}}]{};
	
	\node [transition] (tdec12') [below of = s1', left of= s1', label = {[label distance=-0.05cm]350:\textit{repair1'}}]{};
	
	\node [transition] (tdec22') [below of = s2', label = right:\textit{repair2'}]{};

	\path[->, thick] (t1) 
	edge [pre]  (env)
	edge [post] (tcheck1)
	edge [post] (s2)
	edge [post] (s1)
	edge [post] (e2)
	edge [post] (e1.45);
	
	\path[->, thick] (t2) 
	edge [pre]  (env)
	edge [post] (tcheck2)
	edge [post] (s2'.135)
	edge [post] (s1')
	edge [post] (e1')
	edge [post] (e2');
	
	\path[->, thick] (tdec12) 
	edge [pre]  (e1)
	edge [pre] (s1)
	edge [post] (done12);
	
	\path[->, thick] (tdec21) 
	edge [pre] (s2)
	edge [post] (done21);
	
	\path[->, thick] (tdec12') 
	edge [pre]  (e1')
	edge [pre] (s1')
	edge [post] (done12');
	
	\path[->, thick] (tdec22') 
	edge [pre]  (e2')
	edge [pre] (s2')
	edge [post] (done22');

	\end{tikzpicture}
	\caption{A winning system strategy is presented for the Petri game from \refFig{ProductionLine}, which specifies a production line with two robots. 
	The system players make different decisions depending on the choice of the environment. 
	Transitions which cannot be enabled and unreachable places are removed.
	}
	\label{fig:ProductionLineStrategy}
\end{figure}
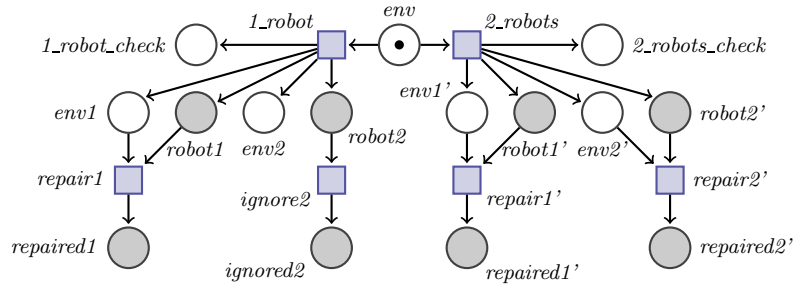

Bounded synthesis for Petri games uses quantified Boolean formulas (QBFs) to decide the existence of a winning system strategy for a given memory bound. 
The decisions at system places are represented explicitly as existentially quantified variables which are tested to be avoiding bad places for subsequent distributions of tokens until the game either terminates or reaches a loop.
The memory bound implies the length of these sequences.
The sequential encoding tests all possible interleavings of transitions, e.g., in our example, first the environment makes a decision between \textit{1\_robot} and \textit{2\_robots} and then two interleavings are tested depending on the ordering of the decisions of both system players.  
Our new concurrent flow semantics identifies such situations and replaces them with one true concurrent step for the decisions of both robots. 
Thereby, we reduce the number of considered traces from four interleavings of length three to two true concurrent traces of length two to verify the winning system strategy of Fig.~\ref{fig:ProductionLineStrategy}.

\section{Background}\label{sec:background}

We introduce the necessary background on 
Petri nets~\cite{DBLP:books/sp/Reisig85a}, Petri games~\cite{petrigamessynthesis}, and the sequential encoding of bounded synthesis for Petri games~\cite{Finkbeiner15}.
Notice that we limit ourselves to \emph{$1$-bounded (safe)} Petri nets for simpler notation. 

\subsection{Petri Nets}

A ($1$-bounded) \textit{Petri net} $\petriNet$ consists of a set of \textit{places} $\pl$, a set of \textit{transitions} $\tr$, a \textit{flow relation} $\fl \subseteq (\pl \times \tr)\cup (\tr \times \pl)$, and an \textit{initial marking} $\init\subseteq\pl$.
The flow relation defines the \textit{arcs} from places to transitions~($\pl \times \tr$) and from transitions to places~($\tr \times \pl$). 
The state of a Petri net is represented by a \textit{marking} $M \subseteq \pl$ which positions one \textit{token} each in all places $p\in M$.
The elements of $\pl\cup\tr$ are considered as \textit{nodes}.
We define the \textit{preset} (and \textit{postset}) of a node~$x$ from Petri net~$\pNet$ as $\pre{\pNet}{x} = \{y \in \pl \cup \tr \mid (y,x)\in\fl\}$ (and $\post{\pNet}{x} = \{y \in \pl \cup \tr \mid (x,y)\in\fl\}$).
The preset and postset of transitions are non-empty and finite.
We use decorated names like $\pNet^\bd$ to also decorate the net's components. 
We abbreviate $\pre{\pNet^\bd}{x}$ and $\post{\pNet^\bd}{x}$ by $\pre{\bd}{x}$ and $\post{\bd}{x}$.
A transition~$t$ is \textit{enabled} at a marking $M$ if $\pre{\pNet}{t} \subseteq M$ holds (denoted by $M\firable{t}$).
An enabled transition~$t$ can be \textit{fired} from a marking~$M$ resulting in the successor marking $M' = (M \setminus \pre{\pNet}{t})\cup \post{\pNet}{t}$ (denoted by $M\firable{t}M'$).
We define the set of \textit{reachable markings} of a Petri net ${\reach(\pNet) = \{M \subseteq \pl \mid \exists t_1,..., t_n \in \tr : \exists M_1,...,M_n \subseteq \pl : \init\firable{t_1}M_1...\firable{t_n} M_n=M\}}$.
Two nodes $x, y$ are \textit{in conflict} (denoted by $\conflict{x}{y}$) if there exists a place $p\in\pl\setminus \{x,y\}$ from which $x$ and $y$ can be reached, exiting $p$ by different transitions. 

\subsection{(Bounded) Unfoldings and Subprocesses}

The \emph{unfolding} $\unf = (\pNet^U, \lambda^U)$ of a Petri net~$\pNet$ explicitly represents the \emph{causal pasts} of all places by eliminating all joins of places in the Petri net and separating these places into appropriate copies. 
Therefore, a loop in a Petri net results in an infinite unfolding.
The homomorphism  $\lambda^U : \pl^U \cup \tr^U \rightarrow \pl \cup \tr$ gives for nodes in the unfolding the corresponding original nodes.
For bounded synthesis, we consider \emph{bounded unfoldings}~$\unf^\bd = (\pNet^\bd, \lambda^\bd)$, where the memory bound $b$ defines how many causal pasts per place can be represented as separate copies.
Thereby, loops are only finitely often unfolded.
A net-theoretic \emph{subprocess} of a Petri net or an unfolding (denoted by $\sqsubseteq$) is defined by removing a set of transitions and all following places and transitions that cannot be reached anymore.

\subsection{Petri Games}\label{sec:petrigames}

A \textit{Petri game} $\petriGameRecall$~\cite{petrigamessynthesis} with $\bad\subseteq\plS\cup\plE$ has an underlying Petri net $\petriNet$ with $\pl = \plS \uplus \plE$. 
The sets $\plS$, $\plE$, and~$\bad$ define the \emph{system places}, the \emph{environment places}, and the \emph{bad places}.
Unfoldings translate from Petri nets to Petri games by keeping the classification of places as system, environment, and bad places.
A \emph{system strategy} for $\pGame$ is a subprocess $\sysstrat = (\pNet^\sysstrat, \lambda^\sysstrat)$ of the unfolding $\unf= (\pNet^U,\lambda^U)$ of $\pGame$ where system places can remove outgoing transitions such that the following requirements hold:
	\begin{itemize}[topsep=5pt]
		\item[(S1)] \emph{Determinism}:\\ $\forall M \in \reach(\pNet^\sysstrat) : \forall p \in M \cap \plS^\sysstrat : \exists^{\leq 1} t \in \tr^\sysstrat : p \in \pre{\sysstrat}{t} \wedge \pre{\sysstrat}{t} \subseteq M$
		\item[(S2)] \emph{System refusal}: $\forall t \in \tr^U : t \notin \tr^\sysstrat \wedge \pre{\sysstrat}{t} \subseteq \pl^\sysstrat \implies ( \exists p \in \pre{\sysstrat}{t}\cap \plS^\sysstrat : \forall t' \in \post{U}{p} : \lambda^U(t) = \lambda^U(t') \implies t' \notin \tr^\sysstrat )$
		\item[(S3)] \emph{Deadlock-avoidance}:\\ $\forall M \in \reach(\pNet^\sysstrat) : \exists t_U \in \tr^U : \pre{U}{t_U} \subseteq M \implies \exists t_\sysstrat \in \tr^\sysstrat : \pre{\sysstrat}{t_\sysstrat} \subseteq M$
	\end{itemize}

\emph{Determinism} requires each system player to have at most one transition enabled for all reachable markings. 
\emph{System refusal} requires that the removal of a transition from the unfolding is based on a system place deleting all outgoing copies of that transition.
This enforces that system players base their decisions only on their causal past. 
\emph{Deadlock-avoidance} requires the system strategy to enable at least one transition for each reachable marking as long as one transition is enabled in the unfolding. 
A system strategy is \emph{winning} for the winning condition \emph{safety} if no bad place can be reached in the system strategy, i.e., $\forall M \in \reach(\pNet^\sysstrat)  : \lambda^\sysstrat[M] \cap \bad = \emptyset$.
The synthesis problem for Petri games with safety as winning objective is EXPTIME-complete if we limit the number of system players or the number of environment players to one~\cite{petrigamessynthesis,DBLP:conf/fsttcs/FinkbeinerG17}. 

\subsection{Sequential Encoding of Bounded Synthesis for Petri Games}

The bounded synthesis algorithm~\cite{Finkbeiner15} takes a Petri game and increases the memory bound $b$ until a winning system strategy is found (or runs forever).
The finite bounded unfolding~$\unf^\bd = (\pNet^\bd, \lambda^\bd)$ is used to encode the existence of a winning system strategy (as variables~$\mathscr{S}^b$) for all sequences of markings (as variables~$\mathscr{M}_n$) up to the maximal simulation length~$n \leq 2^{|\pl^\bd|} + 1$ as QBF.
In the encoding, concurrent transitions are represented by all possible interleavings as between two markings only a single transition is fired. 
For readability, we abbreviate $\pre{\pNet^b}{x}$ by $^\bullet x$ and $\post{\pNet^b}{x}$ by $x^\bullet$.
The QBF has the form $\exists \mathscr{S}^b : \forall \mathscr{M}_n : \phi_n$ where 
		$\mathscr{S}^b \defEQ \{ (p,\lambda^b(t)) \mid p \in\pl^b_S \land t\in p^\bullet \}$ and $\mathscr{M}_n \defEQ \{ (p,i) \mid p \in \pl^b \land 1\leq i \leq n \}$.

The system strategy~$\mathscr{S}^b$ consists of Boolean variables representing the system's choice for each pair of system place in the bounded unfolding and outgoing transition of the corresponding system place in the original game.
This encoding ensures that each system strategy satisfies \emph{system refusal} (S2) because neither pure environment transitions can be disabled nor can transitions be differentiated due to the bounded unfolding.  
The marking sequence $\mathscr{M}_n$ contains Boolean variables for each pair of place in the bounded unfolding and number $1 \leq i \leq n$ to encode in which of the $n$ subsequent markings this place is contained.  

The matrix $\phi_n$ of the QBF $\exists \mathscr{S}^b : \forall \mathscr{M}_n : \phi_n$ is defined as follows:
\begingroup
\allowdisplaybreaks
\begin{eqnarray*}
	\phi_n & \defEQ &  \bigwedge_{1\leq i < n} \Bigg( \mathit{sequence}_i \implies \mathit{win}_i \Bigg) \wedge \big( \mathit{sequence}_n \implies \mathit{win}_n \wedge \mathit{loop} \big)\\
	\mathit{sequence}_i & \defEQ & \mathit{initial} \wedge \mathit{seqflow}_1 \wedge \mathit{seqflow}_2 \wedge \dots \wedge \mathit{seqflow}_{i-1}\\
	\mathit{initial} & \defEQ &  \bigwedge_{p\in \init^b} (p,1) \wedge \bigwedge_{p\in \pl^b \setminus \init^b} \neg(p,1)\\
	\mathit{seqflow}_i & \defEQ &  \bigvee_{t\in\tr^b} \Bigg( \bigwedge_{p\in^\bullet t}(p,i) \wedge \bigwedge_{p \in^\bullet t \cap \plS^b} (p,\lambda^\bd(t)) \wedge  \bigwedge_{p\in t^\bullet} (p,i+1) \wedge\\
	&&~~~~~ \bigwedge_{p \in ^\bullet t\setminus t^\bullet} \neg (p,i+1)  \wedge \bigwedge_{p \in \pl^b \setminus (^\bullet t \cup t^\bullet)} \big((p,i) \iff (p,i+1) \big) \Bigg)
\end{eqnarray*}	
\endgroup

For each simulation point $1\leq i \leq n$, it is tested whether the variables in~$\mathscr{M}_n$ represent a correct $\mathit{sequence}_i$ of markings up to $i$ corresponding to a play in the bounded unfolding. 
If this is the case then \(\mathit{win}_i\) tests whether the marking at $i$ fulfills the requirements to be winning. 
If $i=n$, i.e., the limit on the simulation is reached, it is additionally tested that a $\mathit{loop}$ occurred. 
A correct \emph{sequence} of markings starts from the \emph{initial} marking followed by the \emph{sequential flow} of $i-1$ enabled and by the system strategy allowed transitions.
The \emph{sequential flow} of a transition from time point $i$ requires all places in its preset to contain a token and the system strategy of system places in its preset to allow the transition.
Then, at $i+1$, the places of its postset are set to true, places in its preset but not its postset are set to false, and all other places retain their truth value.
\begingroup
\allowdisplaybreaks
\begin{eqnarray*}
	\mathit{win}_i & \defEQ & \mathit{nobadplace}_i \wedge \mathit{deterministic}_i \wedge \big(\mathit{deadlock}_i \implies \mathit{terminating}_i\big) \\
	\mathit{nobadplace}_i & \defEQ & \bigwedge_{p\in\bad^b} \neg(p,i) \\
	\mathit{deterministic}_i & \defEQ & \bigwedge_{\substack{t_1,t_2\in\tr,t_1\neq t_2, \\ {^\bullet t_1} \cap {^\bullet t_2} \cap \plS^b \neq \emptyset}} \Bigg( \bigvee_{p\in^\bullet t_1 \cup ^\bullet t_2}\hspace{-0.5cm}\neg(p,i) \vee\hspace{-0.5cm}\bigvee_{\substack{p_1\in ^\bullet t_1 \cap \plS^b, \\ p_2\in ^\bullet t_2 \cap \plS^b}}\hspace{-0.5cm}\neg (p_1,\lambda^\bd(t_1)) \vee \neg (p_2,\lambda^\bd(t_2)) \Bigg)\\
	\mathit{deadlock}_i & \defEQ & \bigwedge_{t\in\tr^b} \Bigg( \bigvee_{p\in^\bullet t} \neg(p,i) \vee \bigvee_{p\in ^\bullet t \cap \plS^b} \neg (p,\lambda^\bd(t)) \Bigg)\\
	\mathit{terminating}_i & \defEQ & \bigwedge_{t \in \tr^b} \Bigg( \bigvee_{p\in ^\bullet t} \neg(p,i) \Bigg)\\
	\mathit{loop} & \defEQ & \bigvee_{1 \leq i_1 < i_2 \leq n} \Bigg(\bigwedge_{p\in\pl^b} \big((p,i_1) \iff (p,i_2) \big) \Bigg)
\end{eqnarray*}
\endgroup

If \(\mathit{sequence}_i\) is fulfilled then \(\mathit{win}_i\) tests whether the last marking fulfills the requirements to be winning at $i$. 
If $i=n$, i.e., the limit on the simulation is reached, it is additionally tested that a $\mathit{loop}$ occurred. 
The play is winning if \emph{no bad place} is reached, the system makes only $\mathit{deterministic}$ decisions (S1), and each $\mathit{deadlock}$ is caused by $\mathit{termination}$ (S3). 
A deadlock occurs when no transition is enabled including the choices of the system strategy~$\mathscr{S}^b$. 
Meanwhile, termination occurs when no transition is enabled independently of the system strategy. 
Therefore, $\mathit{deadlock}_i \implies \mathit{terminating}_i$ ensures that the system does not prevent the reaching of bad places by stopping to fire transitions, but deadlocks are only allowed when the entire game terminates. 
A \emph{loop} in a Petri game occurs when the same marking is repeated at two different simulation points. 
As the system strategy has to be deterministic, its behavior repeats infinitely often in the loop such that the system strategy is also winning in an infinite play.

\section{True Concurrency in Petri Games}\label{sec:trueCon}
In this section, we define true concurrency in Petri games.
Therefore, we first formalize \emph{environment strategies} to explicitly represent environment decisions in response to a given system strategy.  
This enables us to define the \emph{true concurrent flow semantics} for Petri games, which enforces that transitions are fired as early and as parallel as possible. 
We prove that this semantics agrees with the interleaving semantics on the existence of a winning strategy for the system.

\subsection{Environment Strategy}

System strategies represent the system's restrictions of enabled transitions but purely environmental transitions remain uncontrollable.
Therefore, a system strategy can result in different fired transitions. 
We introduce \emph{environment strategies} to explicitly represent decisions of environment players and to obtain a unique sequence of fired transitions up to reordering of independent transitions. 

An \emph{environment strategy} $\envstrat = (\pNet^\envstrat, \lambda^\envstrat)$ is a subprocess of a system strategy $\sysstrat = (\pNet^\sysstrat, \lambda^\sysstrat)$ (which, in turn, is a subprocess of the unfolding $\unf = (\pNet^{U},\lambda^U)$ of the given Petri game~$\pGame$) where environment places can remove outgoing transitions such that the following three requirements hold:
\begin{itemize}[topsep=5pt]
	\item[(E1)] \emph{Explicit choice}: $ \forall p \in \pl^\envstrat_E : \exists^{\leq 1} t \in \tr^\envstrat : p \in \pre{\envstrat}{t} $
	\item[(E2)] \emph{Environment refusal}: $\forall t \in \tr^\sysstrat : t \notin \tr^\envstrat \land \pre{\sysstrat}{t} \subseteq \pl^\envstrat \Rightarrow \pre{\sysstrat}{t} \cap \plE^\envstrat \neq \emptyset$
	\item[(E3)] \emph{Progress}:\\ $\forall M \in \reach(\pNet^\envstrat) : \exists t_\sysstrat \in \tr^\sysstrat : \pre{\sysstrat}{t_\sysstrat} \subseteq M \Rightarrow \exists t_\envstrat \in \tr^\envstrat : \pre{\envstrat}{t_\envstrat} \subseteq M$
\end{itemize}

\emph{Explicit choice} requires each environment player to choose at most one of its outgoing transitions.
\emph{Environment refusal} enforces environment strategies to only remove transitions with at least one environment place in their preset. 
\emph{Progress} requires the environment strategy to enable at least one transition for each reachable marking as long as a transition is enabled in the system strategy. 
Environment strategies resolve the remaining conflicts of a Petri game:
\begin{theorem}
An environment strategy $\envstrat$ leads to a unique sequence of fired transitions up to reordering of independent transitions $(\forall p \in \pl^{\envstrat}: |\post{\envstrat}{p}|  \leq 1)$.\label{them:uniquerun}
\end{theorem}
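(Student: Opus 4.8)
The plan is to reduce the statement to the parenthetical structural claim that no place of $\pNet^\envstrat$ has more than one outgoing transition, $\forall p \in \pl^\envstrat : |\post{\envstrat}{p}| \leq 1$, and then to read off uniqueness up to reordering from it. Once every place has at most one outgoing transition, $\pNet^\envstrat$ contains no conflict in the sense of $\conflict{\cdot}{\cdot}$, since a conflict would require some place with two distinct exiting transitions, which is exactly what the claim forbids. A conflict-free subprocess of an occurrence net is a causal net, so at every reachable marking the enabled transitions have pairwise disjoint presets and therefore commute: firing one cannot disable another, because in the $1$-bounded unfolding each place carries at most one token and is produced at most once, so the preset of a second enabled transition survives the firing of the first. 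This strong diamond property makes the firing relation confluent, so all maximal firing sequences are linearizations of one and the same partial order of events; this is precisely a unique sequence of fired transitions up to reordering of independent (concurrent, preset-disjoint) transitions.

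It therefore remains to prove the structural claim, which I would split by the type of the place. For an environment place $p \in \pl^\envstrat_E$, condition (E1) states $\exists^{\leq 1} t \in \tr^\envstrat : p \in \pre{\envstrat}{t}$ directly, giving $|\post{\envstrat}{p}| \leq 1$. For a system place $p \in \plS^\envstrat$ the claim must be derived from determinism. Since $\envstrat$ is a subprocess of $\sysstrat$, every marking reachable in $\pNet^\envstrat$ is reachable in $\pNet^\sysstrat$, so (S1) applies: no reachable marking enables two distinct transitions that both contain the system place $p$ in their preset. Suppose, for contradiction, that $p$ had two outgoing transitions $t_1 \neq t_2$ in $\tr^\envstrat$. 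Both survive in the subprocess, hence both are reachable, so there are reachable markings containing $\pre{\envstrat}{t_1}$ and $\pre{\envstrat}{t_2}$ respectively. If $\pre{\envstrat}{t_1} \cup \pre{\envstrat}{t_2}$ were co-reachable, some reachable marking would enable both $t_1$ and $t_2$, immediately contradicting (S1). Hence some $q_1 \in \pre{\envstrat}{t_1}$ and $q_2 \in \pre{\envstrat}{t_2}$ satisfy $\conflict{q_1}{q_2}$.

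The heart of the argument, and the step I expect to be the main obstacle, is closing off this last case. A conflict $\conflict{q_1}{q_2}$ is witnessed by an ancestor place $r$ with two distinct exiting transitions from which $q_1$ and $q_2$ are respectively reachable; since both branches remain reachable in $\envstrat$, both of these transitions lie in $\tr^\envstrat$, so $r$ is again a place with two outgoing transitions. By (E1), $r$ cannot be an environment place, so $r$ must be a system place whose two outgoing transitions are never co-enabled, which forces, by the same reasoning as for $p$, a further conflict in their presets with its own witnessing system choice place. The delicate point is that this witness lies in a region of the unfolding \emph{concurrent} to $r$ rather than strictly below it, so the reduction does not obviously descend along the causal order. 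I would therefore argue termination from finiteness of the bounded unfolding $\unf^\bd$: the process would otherwise have to exhibit unboundedly many distinct system places each carrying a genuine, never-co-enabled branching, which a finite net cannot. Making this regress precise, either by exhibiting a strictly decreasing measure on the witnessing conflicts or by a direct global argument that no maximal run of a process satisfying (E1)--(E3) and (S1)--(S3) fires two transitions consuming the same condition, is the technical crux; everything else, namely the environment case via (E1), the reduction to a preset conflict via (S1), and the passage from conflict-freeness to confluence, is routine.
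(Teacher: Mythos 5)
Your decomposition is the same as the paper's: reduce everything to the structural claim $\forall p \in \pl^{\envstrat} : |\post{\envstrat}{p}| \leq 1$, split on the type of $p$ (environment places via (E1), system places via determinism (S1) plus the environment's explicit choices), and then read off uniqueness up to reordering from the absence of branching. You are in fact more explicit than the paper at both ends: the paper treats the passage from ``no place has two successor transitions'' to ``unique run up to independent reordering'' as immediate, where you spell out the conflict-freeness and confluence argument, and the paper disposes of the entire system-place case with the single sentence that the remaining non-determinism ``is resolved by the environment strategy.''

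That one sentence is exactly the step you isolate as the technical crux and leave open, so your proposal does contain a genuine gap --- and your proposed repair does not work as stated, because the theorem concerns environment strategies over the (in general infinite) unfolding $\unf$, not the finite bounded unfolding $\unf^\bd$, so you cannot terminate the regress by counting places of a finite net. The way to close it is to exploit the occurrence-net structure of $\unf$: any conflict $\conflict{q_1}{q_2}$ is inherited from an \emph{immediate} conflict, i.e., from two distinct transitions sharing a condition in their presets, and that shared condition lies at strictly smaller causal depth than $q_1$ and $q_2$. A well-founded induction on (multisets of) causal depths then lets you replace any place with two surviving outgoing transitions $u_1 \neq u_2$ by one for which $\pre{\envstrat}{u_1} \cup \pre{\envstrat}{u_2}$ is a co-set; its joint causal past is a conflict-free configuration inside $\tr^{\envstrat}$, so $u_1$ and $u_2$ are co-enabled at some marking in $\reach(\pNet^{\envstrat}) \subseteq \reach(\pNet^{\sysstrat})$, contradicting (S1) if the place is a system place and (E1) if it is an environment place. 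You also need, and should state, that the subprocess construction removes transitions that can never fire, so that a branch killed by an environment choice elsewhere does not survive as a structurally present but dead second successor. To be fair to you, the paper's own proof supplies none of this either --- it asserts the resolution --- so you have not missed an argument that the published proof actually gives; but as written your proposal is an accurate proof plan with its hardest step still to be carried out, not a complete proof.
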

\begin{proof}
	A system strategy $\sysstrat$ satisfies for all system places $p \in \pl^\sysstrat_S$ either the condition $ |\post{\sysstrat}{p}| \leq 1$ or the non-determinism in the choice of the successor transition is resolved by the environment strategy~$\envstrat$. 
	Since the environment strategy explicitly chooses at most one outgoing transition in each environment place, $\forall p \in \pl^{\envstrat}_S: |\post{\envstrat}{p}| \leq 1$ is satisfied. 
	For all environment places $ p \in \plE^{\envstrat}$, the condition $|\post{\envstrat}{p}| \leq 1$ is satisfied by the definition of environment strategies. 
	Since $\plS^{\envstrat} \cup \plE^{\envstrat} = \pl^{\envstrat}$ holds, $\pNet^{\envstrat}$ has a unique sequence of fired transitions up to reordering of independent transitions.
	\qed
\end{proof}

The requirements for environment strategies are similar to the ones for system strategies:
(E1) does not iterate over reachable markings in comparison to (S1) to require unique decisions by environment players, 
(E2) allows differentiation of transitions due to the unfolding in comparison to (S2), again, to enable unique decision, and 
(E3) is (S3) lifted directly to environment strategies.

$\envstrat \sqsubseteq_E \sysstrat$ denotes an environment strategy~$\envstrat$ as subprocess of a system strategy~$\sysstrat$ subject to (E1) to (E3). 
$\sysstrat \sqsubseteq_S \unf$ denotes a system strategy~$\sysstrat$ as a subprocess of the unfolding~$\unf$ subject to (S1) to (S3).
An environment strategy~$\envstrat$ is \emph{winning} (and a \emph{counterexample} to the system strategy $\sysstrat$ being winning) if it reaches a bad place. 
We define a system strategy to be \emph{winning} against all its environment strategies:
a system strategy $\sysstrat$ is winning if no bad places are reached for all environment strategies, i.e., $\forall \envstrat \sqsubseteq_E \sysstrat : \forall M \in \reach(\pNet^{\envstrat}): \lambda^{\envstrat}[M]   \cap \bad = \emptyset $.

Figure~\ref{fig:ProductionLineEnvStratgegy} shows a winning environment strategy for a system strategy of our running example with the bad place~$\bot$. 
By the initial decision for \textit{1_robot} by the environment strategy, the right side of the system strategy becomes unreachable.
The system chooses the transitions \textit{repair1} and \textit{repair2} in response to \textit{1_robot} by the environment strategy. 
By choosing \textit{1_robot}, the second robot should have ignored the product. 
The system strategy has to enable \textit{wrong_repair} to avoid a deadlock and the environment strategy agrees on firing it to reach the bad place.

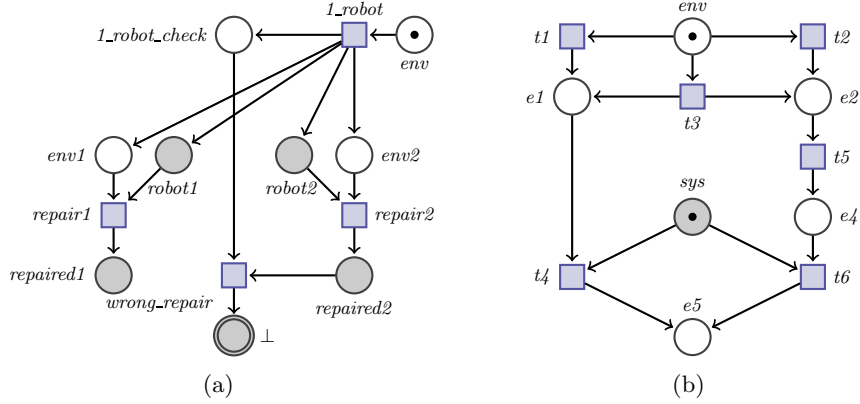
\begin{figure}[t]
	\centering
	
		\subcaptionbox{\label{fig:ProductionLineEnvStratgegy}}[.5\linewidth]{
	\begin{tikzpicture}[scale=0.8, every node/.style={transform shape}]
	\node [envplace] (env) [tokens = 1, label=below:\textit{env}]{};
	
	
	\node [envplace] (e1) [left of = env, left of = env, left of = env, left of=env, left of = env, below of = env, below of = env, label=left:\textit{env1}]{};
	\node [sysplace] (s1) [right of = e1, label=below:\textit{robot1}]{};
	\node [sysplace] (s2) [right of = s1, right of = s1, label=below:\textit{robot2~~}]{};
	\node [envplace] (e2) [right of = s2, label=right:\textit{env2}]{};

	\node [sysplace] (done12) [below of = s1, below of = s1,  left of = s1, label =left:\textit{repaired1}]{};
	

	\node [envplace] (tcheck1) [left of= env, left of= env, left of = env, label=left:\textit{1\_robot\_check}]{};

	\node [sysplace] (done22) [below of = s2, below of = s2,  right of = s2, label = below:\textit{repaired2}]{};
	
	\node [transition](t1) [left of= env, label=above:\textit{1\_robot}]{};

	\node [transition] (tdec12) [below of = s1, left of= s1, label = left:\textit{repair1}]{};
	\node [transition] (tdec22) [below of = s2, right of= s2, label = right:\textit{repair2}]{};
	
	\node [transition] (tbad5) [ right of= done12, right of= done12, label = below left:\textit{wrong\_repair}]{};
	
	\node [sysplace, bad] (bad) [below of = tbad5, label = right:$\bot$]{};

	\path[->, thick] (t1) 
	edge [pre]  (env)
	edge [post] (tcheck1)
	edge [post] (s2)
	edge [post] (s1)
	edge [post] (e2)
	edge [post] (e1);
	
	\path[->, thick] (tdec22) 
	edge [pre]  (e2)
	edge [pre] (s2)
	edge [post] (done22);
	
	\path[->, thick] (tdec12) 
	edge [pre]  (e1)
	edge [pre] (s1)
	edge [post] (done12);

	\path [->, thick] (tbad5)
	edge [pre] (done22)
	edge [pre] (tcheck1)
	edge [post] (bad);

	\end{tikzpicture}}
	\qquad
	\subcaptionbox{\label{fig:NonDet}}[.4\linewidth]{		
	\begin{tikzpicture}[scale=0.8, every node/.style={transform shape}]
		\node [envplace] (env) [tokens = 1, label=above:\textit{env}]{};
		\node [transition](t1) [left of= env, left of = env, label= left:\textit{t1}]{};
		\node [transition](t2) [right of= env, right of=env, label= right:\textit{t2}]{};
		
		\node [envplace] (e1) [ below of=t1, label=left:\textit{e1}]{};

		\node [transition](t4) [below of=e1, below of= e1, below of = e1,  label= left:\textit{t4}]{};
		\node [envplace](e2) [below of=t2,   label=right:\textit{e2}]{};
		\node [transition](t3) [below of=env, label=below :\textit{t3}]{};
		\node [transition] (t5) [below of =e2, label = right: \textit{t5}]{};
		\node[sysplace] (sys) [tokens=1, below of = t3, below of=t3, label=above:\textit{sys}]{};
		
		\node [envplace](e4) [below of=t5,   label=right:\textit{e4}]{};
		
		\node [transition] (t6) [below of =t5, below of=t5, label = right: \textit{t6}]{};
		
		\node [envplace] (e5) [below of = sys, below of = sys, label = above:\textit{e5}]{};
		
		\path[->, thick] (t1) 
		edge [pre]  (env)
		edge [post] (e1);
 		
 		\path[->, thick] (t2) 
 		edge [pre]  (env)
 		edge [post] (e2);
 		
 		\path[->, thick] (t3) 
 		edge [pre]  (env)
 		edge [post] (e1)
 		edge [post] (e2);
 		
 		\path[->, thick] (t5) 
 		edge [pre]  (e2)
 		edge [post] (e4);
 		
		 \path[->, thick] (t4) 
		 edge [pre] (sys)
		 edge [pre]  (e1)
		 edge [post] (e5);	
		 
		 \path[->, thick] (t6) 
		 edge [pre] (sys)
		 edge [pre]  (e4)
		 edge [post] (e5);	
		 
 		\end{tikzpicture}}
	\caption{Two strategies are depicted for the Petri game specifying a production line from \refFig{ProductionLine}: a winning environment strategy for a system strategy~(a) and a winning system strategy with more than one outgoing transition at place \emph{sys}~(b).
	}
\end{figure}

\subsection{True Concurrent Flow Semantics}
We define the \emph{true concurrent flow semantics} for Petri games by firing a maximal set of enabled, conflict-free transitions in every step. 
For the marking $M$ and the set of enabled, conflict-free transitions $T = \{t_1, \ldots, t_n\} $, the successor marking $\markingPG'$ is defined by $\markingPG\firable{T} \markingPG'$, where $\pre{\pNet}{t_1} \uplus  \ldots \uplus \pre{\pNet}{t_n} \subseteq \markingPG$ and $\markingPG' = (\markingPG \backslash (\pre{\pNet}{t_1}\uplus \ldots \uplus \pre{\pNet}{t_n})) \uplus \post{\pNet}{t_1} \uplus \ldots \uplus \post{\pNet}{t_n}$. 
The set of reachable markings according to the true concurrent flow semantics is defined by
\begin{eqnarray*}  
\reachtc(\pNet) & = &  \{\markingPG \subseteq \pl \mid \exists \text{ maximal } T_1, \ldots, T_n \subseteq \tr : \exists M_1,\ldots,M_n \subseteq \pl :  \\ 
&& \init\firable{T_1}M_1 \firable{T_2}  \ldots \firable{T_n} M_n=M\} \qquad \text{where }  |T_1|,\ldots, |T_n| > 0 
\end{eqnarray*}
We denote the set of reachable markings in the sequential flow semantics by $\reachseq(\pNet) = \reach(\pNet)$.
Firing all enabled transitions in the true concurrent flow semantics at once yields a unique sequence of markings and therefore a unique sequence of sets of fired transitions.
This brings us to the following theorem:

\begin{theorem}\label{th:equivalentexistence}
	There exists a winning system strategy of a Petri game under the sequential flow semantics iff there exists a winning system strategy of a Petri game under the true concurrent flow semantics.
\end{theorem}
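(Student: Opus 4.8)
The plan is to show that the two semantics agree, for every system strategy, on which bad places are reachable, so that the same strategies are winning under both and the existence claim transfers immediately. The backbone is the inclusion $\reachtc(\pNet^\sysstrat) \subseteq \reachseq(\pNet^\sysstrat)$: a maximal conflict-free set $T$ fired in one true concurrent step has pairwise disjoint presets, so its transitions can be linearized in an arbitrary order as individual sequential firings leading to the very same successor marking; iterating this turns any true concurrent run into a sequential run that visits a superset of the same markings. This already settles the direction from sequential to true concurrent: if no marking of $\reachseq(\pNet^\sysstrat)$ meets a bad place, then neither does any marking of the smaller set $\reachtc(\pNet^\sysstrat)$, so the same $\sysstrat$ is winning under the true concurrent flow semantics. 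Note that the requirements (S1)--(S3) are reachability conditions phrased over $\reach=\reachseq$ and thus do not change with the semantics, so validity of a strategy never has to be re-established and only the reachability of bad places is genuinely at stake.

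For the converse I would reduce the claim to the key lemma that, for a (sequentially valid) system strategy $\sysstrat$, some $M\in\reachseq(\pNet^\sysstrat)$ satisfies $\lambda^\sysstrat[M]\cap\bad\neq\emptyset$ iff some $M\in\reachtc(\pNet^\sysstrat)$ does, the nontrivial direction being to reschedule a sequential run that reaches a bad place into a maximal-step run that still reaches it. Here environment strategies are the right tool. First I would prove that $\sysstrat$ reaches a bad place sequentially iff some environment strategy $\envstrat\sqsubseteq_E\sysstrat$ reaches one: the easy direction holds because $\envstrat$ is a subprocess of $\sysstrat$, and for the other direction, given a sequential witnessing run, one fixes at every environment place the outgoing transition taken along the run, which satisfies explicit choice (E1) and environment refusal (E2), and arranges the remaining firing order so that progress (E3) holds. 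By Theorem~\ref{them:uniquerun} such an $\envstrat$ has a unique run up to reordering of independent transitions; consequently its sequential execution and its maximal-step execution traverse exactly the same markings and hence reach the same bad place, which places that bad place in $\reachtc(\pNet^\sysstrat)$ as well. Chaining these equivalences yields that each valid $\sysstrat$ is winning under one semantics iff it is winning under the other, whence the equivalence of existence.

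The main obstacle is constructing the witnessing environment strategy so that maximal concurrent firing does not derail the path to the bad place while still respecting determinism. The difficulty is exactly the configuration in the net of Fig.~\ref{fig:NonDet}: a system place may carry several outgoing transitions that become enabled under different environment timings, so firing an independent environment transition \emph{too early} can expose a marking in which two of these system transitions compete, or can consume a token needed later on the bad run. I expect to resolve this with the stalling freedom of the environment, letting the witnessing $\envstrat$ delay selected transitions so that the system's choice at each system place is forced exactly as on the sequential run and the induced run is genuinely unique. Verifying that such stalling can always be installed without violating progress (E3) --- that is, that delaying a transition never creates a spurious deadlock in $\envstrat$ --- is where the real work lies, and the occurrence-net structure of the unfolding (causal closure and conflict-freeness of the fired transitions) is what makes it go through.
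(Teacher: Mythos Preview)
Your first two paragraphs follow the paper's argument closely: both factor through environment strategies and invoke Theorem~\ref{them:uniquerun} to argue that, under a fixed $\envstrat$, the sequential and maximal-step runs visit the same places. The paper breaks the proof into (1)~$\forall M\in\reachseq(\pNet^\sysstrat):\lambda^\sysstrat[M]\cap\bad=\emptyset \iff \forall\envstrat\sqsubseteq_E\sysstrat:\forall M\in\reachseq(\pNet^\envstrat):\lambda^\envstrat[M]\cap\bad=\emptyset$ and (2)~the same with $\reachseq(\pNet^\envstrat)$ replaced by $\reachtc(\pNet^\envstrat)$. The point is that the paper \emph{takes} the right-hand side of~(2) as the meaning of ``winning under the true concurrent flow semantics'', so~(2) is immediate from Theorem~\ref{them:uniquerun}. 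You instead read the true-concurrent side as $\forall M\in\reachtc(\pNet^\sysstrat)$ and therefore must push the bad marking from $\reachtc(\pNet^\envstrat)$ back into $\reachtc(\pNet^\sysstrat)$. That step is not free: a maximal conflict-free set of enabled transitions in $\pNet^\envstrat$ need not be maximal in $\pNet^\sysstrat$, since an environment place may have an alternative outgoing transition in $\sysstrat$ whose $\envstrat$-chosen sibling is not yet enabled, so the alternative conflicts with nothing in the $\envstrat$-step and would have to be added.

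Your third paragraph is where the proposal drifts. Stalling is \emph{not} part of environment strategies; it is introduced only in Section~\ref{sec:qbf} as a device for the QBF encoding, so that the encoding can expose markings violating~(S1). The net of Fig.~\ref{fig:NonDet} illustrates exactly that encoding concern --- catching a candidate that fails determinism --- and is irrelevant to the present theorem, which quantifies only over strategies already satisfying (S1)--(S3). Once $\sysstrat$ is valid and $\envstrat\sqsubseteq_E\sysstrat$ is fixed, Theorem~\ref{them:uniquerun} gives $|\post{\envstrat}{p}|\leq 1$ at every place, so the run of $\envstrat$ is unique and reaches the bad place under either scheduling; no delay mechanism is needed and the obstacle you anticipate simply does not arise.
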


\begin{proof}
	We show that 
	(1) $\exists \sysstrat \sqsubseteq_S \unf : \forall M \in \reachseq(\pNet^\sysstrat)  : \lambda^\sysstrat[M] \cap \bad = \emptyset \iff \exists \sysstrat \sqsubseteq_S \unf : \forall \envstrat \sqsubseteq_E \sysstrat : \forall M \in \reachseq(\pNet^{\envstrat}): \lambda^{\envstrat}[M] \cap \bad = \emptyset$ and that 
	(2) $\exists \sysstrat \sqsubseteq_S \unf : \forall \envstrat \sqsubseteq_E \sysstrat : \forall M \in \reachseq(\pNet^{\envstrat}): \lambda^{\envstrat}[M] \cap \bad = \emptyset \iff \exists \sysstrat \sqsubseteq_S \unf : \forall \envstrat \sqsubseteq_E \sysstrat : \forall M \in \reachtc(\pNet^{\envstrat}): \lambda^{\envstrat}[M] \cap \bad = \emptyset$. 
	Since (1) is based on the sequential flow, every sequence of markings in $\reachseq(\pNet^\sysstrat)$ can be produced with an environment strategy choosing exactly the transitions of the sequence and vice versa. 
	For (2), we show that the environment wins on the same nets by reaching a bad place: 
	either $\exists \envstrat \sqsubseteq_E \sysstrat : \exists \markingPG \in \reachseq(\pNet^{\envstrat}): \lambda^\envstrat [M] \cap \bad \neq \emptyset$ holds or not. 
	As each environment strategy results in a unique sequence of fired transitions (up to reordering of independent transitions), the sets of reachable places in the reachable markings $\reachseq{(\pNet^\envstrat)}$ and $\reachtc{(\pNet^\envstrat)}$ are the same.\qed
\end{proof}

\section{True Concurrent Encoding of Bounded Synthesis}
\label{sec:qbf}

We show how the requirements (E1) to (E3) on environment strategies and the true concurrent flow semantics can be encoded as QBF. 
We introduce \emph{stalling} of transitions to let environment players find non-determinism in a system strategy.
Furthermore, we present how the true concurrent flow semantics can be used to detect loops earlier in the encoding of bounded synthesis for Petri games.

\subsection{Stalling of Transitions to Find Non-Determinism}

To use the true concurrent flow semantics in bounded synthesis for Petri games, we ensure that all possible system strategies fulfill the assumptions (S1) to (S3) and do not reach any bad place.
The determinism requirement can be violated when the sequential flow encoding is simply replaced by the true concurrent flow encoding as markings may be skipped by firing transitions as early as possible. 

Figure~\ref{fig:NonDet} shows a Petri game without bad places. 
It is not winning for the system, as \textit{t4} and \textit{t6} have to be enabled (deadlock-avoidance) and there is a marking where both transitions are enabled (non-determinism). 
This contradicts \emph{determinism} (S1) but in the true concurrent flow semantics, \textit{t4} will always be fired before \textit{t6} such that the marking with non-determinism of the system is never reached. 
To check the requirements for system strategies in the true concurrent encoding, environment players can \emph{stall} transitions with at least one system place in their preset globally to catch up with the system.
The requirement \emph{determinism} (S1) can only be violated at system places.
In Fig.~\ref{fig:NonDet}, the environment strategy needs to stall the firing of $t4$ until $t5$ is fired to prove that a potential system strategy enabling both transitions is non-deterministic. 

\subsection{Encoding True Concurrency as QBF}\label{sec:trueConEncoding}

We extend the sequential encoding of bounded synthesis for Petri games \cite{Finkbeiner15,Jesko} to environment strategies with stalling and the true concurrent flow semantics. 
The strategy of the environment is translated into additional universally quantified variables. 
The QBF-formula is $\exists \mathscr{S}^b : \forall \mathscr{M}_n : \forall \mathscr{E}^b : \phi_n$ with $\mathscr{E}^b$ as the union of variables for each environment choice in the firing of transitions and variables for transitions with at least one system place in their preset to stall their progress.
This encoding preserves the requirement of \emph{environment refusal} (E2):
\begin{eqnarray*}
	\mathscr{E}^b & \defEQ & \{(p, t, i) \mid p \in \plE^\bd \land t \in p^\bullet \land 1 \leq i < n\}  \cup \{ (t) \mid t \in \tr^\bd \land {^\bullet t} \cap \plS^\bd \neq \emptyset \}
 \end{eqnarray*}
 
Bounded unfoldings may contain loops. 
The variables for the environment are different for every simulation point, such that decisions of revisited environment places do not depend on previous visits. 
By contrast, a global decision independent of the simulation points suffices for stalling. 
The case when variable $(t)$ is set to false results in the stalling of transition~$t$. 
In the following, not mentioned formulas are as they are in the sequential encoding. 
We apply the requirement \emph{explicit choice} (E1) of the environment strategy to $\phi_n$ and encode it in $\mathit{choice}$:
\begingroup
\allowdisplaybreaks
\begin{eqnarray*}
	\phi_n & \defEQ &  \mathit{choice} \implies 
	\bigwedge_{1 \leq i < n} \Bigg( \mathit{seq}_i \implies \mathit{win}_i \Bigg) \wedge \big( \mathit{seq}_n \implies \mathit{win}_n \wedge \mathit{loop} \big) \\
	\mathit{choice} & \defEQ & \bigwedge_{p \in \places_E^b, 1 \leq i < n} \Bigg(\bigvee_{t \in p^\bullet}\Big( (p, t,i) \wedge \bigwedge_{t' \in p^\bullet \setminus \{t\}}\neg(p, t',i) \Big) \Bigg)\\
	\mathit{seq}_i & \defEQ & \mathit{initial} \wedge \mathit{tcflow}_1 \wedge \mathit{tcflow}_2 \wedge \dots \wedge \mathit{tcflow}_{i-1}
\end{eqnarray*}
\endgroup

Each environment place has to choose exactly one outgoing transition which results in the firing of at most one outgoing transitions per environment place, because the other places in the preset of the transition also have to decide for the transition.
This encoding furthermore ensures \emph{progress} (E3).
We substitute the sequential flow $\mathit{seqflow}_i$ by the true concurrent flow $\mathit{tcflow}_i$, which enforces the firing of all enabled and not stalled transitions and maintains all other tokens.
\begingroup
\allowdisplaybreaks
\begin{eqnarray*}
	\mathit{tcflow}_i & \defEQ & \mathit{fireenabled}_{i} \wedge \mathit{updateplaces}_{i} \\
	\mathit{fireenabled}_{i} & \defEQ & \bigwedge_{t \in \tr^\bd} \Bigg( \mathit{enabled}_{i,t} \implies \bigwedge_{p\in {^\bullet t} \setminus t^\bullet}\neg(p,i+1) \wedge \bigwedge_{p\in t^\bullet} (p,i+1) \Bigg) \\
	\mathit{updateplaces}_{i} & \defEQ & \bigwedge_{p \in \pl^\bd}\Bigg( \bigwedge_{t \in {^\bullet p} \cup p^\bullet} \neg \mathit{enabled}_{i,t} \implies \big((p,i) \iff (p,i+1)\big)\Bigg)\\
	\mathit{enabled}_{i,t}&\defEQ & \bigwedge_{p \in {^\bullet t}}(p,i) \wedge \bigwedge_{p \in \plS^\bd\cap{^\bullet t}} (p,\lambda^\bd (t)) \wedge \bigwedge_{p \in \plE^\bd\cap{^\bullet t}} (p, t,i) \wedge (t) 
\end{eqnarray*}
\endgroup

$\mathit{enabled}_{i,t}$ requires tokens in all places in the preset of the transition, both the system and the environment strategy to allow the transition for corresponding places in the preset of the transition, and that stalling allows the transition. 

$\mathit{win}_i$ remains unchanged.
Therefore, environment strategies and stalling only affect the flow of tokens but not the check that reached markings are winning.

\subsection{Shorter Loops via Strongly Connected Components}

Environment strategies allow us to define the true concurrent flow semantics which allows us to detect loops earlier by searching for them in \emph{strongly connected components} (SCCs)~\cite{jensen2013coloured}.
The definition of SCCs can be directly lifted to Petri games by including an additional set with all places that are not in any other SCC.
With SCCs, we find loops in independent parts of the Petri game as early as possible. 
We encode that a loop no longer only occurs at the repetition of a global marking but also when all $\mathit{SCCs} \subseteq 2^{\pl^\bd}$ repeat their marking, respectively:
\begin{eqnarray*}
	\mathit{loop} & \defEQ & \bigwedge_{\mathit{scc} \in \mathit{SCCs}}\Bigg(\bigvee_{1\leq i_1 < i_2 \leq n} \Big(\bigwedge_{p \in \mathit{scc}} \big((p,i_1) \iff (p,i_2) \big) \Big) \Bigg)
\end{eqnarray*}

\section{Experimental Results}\label{sec:implementation}

We compare the sequential encoding~\cite{Finkbeiner15} with our new true concurrent encoding from Section~\ref{sec:qbf} on five benchmark families.
At first, we describe the asynchronous and distributed nature of these benchmark families stemming from alarm systems, routing, robotics, and communication protocols. 
Afterwards, we outline the technical details of our comparison framework and state our observations and explanations concerning the observed times for finding winning strategies.

\subsection{Benchmark Families}
Table~\ref{fig:table} refers to the following scalable benchmark families where Collision Avoidance, Disjoint Routing, and Production Line are new benchmark families: 
\begin{itemize}
	\item \textbf{AS}: \emph{Alarm System} \cite{Jesko}. \emph{Parameters:} $m$ locations.
	There are $m$ secured locations and a burglar can intrude one of them. 
	The local alarm system of each location can communicate with all other local alarm systems. 
	The local alarm systems should indicate the position of an intrusion and should not issue unsubstantiated warnings of an intrusion.
	\item \textbf{CA:} \emph{Collision Avoidance}. \emph{Parameters:} $m$ robots. A subset of $m$ robots is initialized to drive on individual paths of increasing length with several goal states. 
	They should avoid collisions and drive forever on the chosen route.
	\item \textbf{DR:} \emph{Disjoint Routing}. \emph{Parameters:} $m$ packets. 
	In a software-defined network, $m$ packets should be routed disjointly between an ingress and an egress switch where the network allows $m$ disjoint paths between the two switches.
	\item \textbf{PL:} \emph{Production Line}. \emph{Parameters:} $m$ robots.
	The $m$ independent robots are able to repair or ignore $m$ features of a product. 
	Depending on the product, some features need to be repaired while others must not be repaired.
	\item \textbf{DW:} \emph{Document Workflow} \cite{ADAM}. \emph{Parameters:} $m$ workers.
	A document circulates between $m$ workers with the environment choosing the first worker.
	It is required that all workers unanimously endorse or reject the document.
\end{itemize}

\begin{table}[t]
	\centering
	\caption{Benchmarking results on our Petri game \emph{benchmark families} for increasing \emph{parameters}. For the \emph{sequential} and the \emph{true concurrent} encoding, the needed model checking \textit{iterations} with accumulated \emph{runtime in seconds} are reported.
	}
	\label{fig:table}
 	\begin{tabular}{c|c|c|c|c|c}
		\hline
		\hline
		&& \multicolumn{2}{|c}{\textit{Sequential}} & \multicolumn{2}{|c}{\textit{True Concurrent}}\\     
		\textit{Ben.} & \textit{Par.} & \textit{Iter.} & \textit{Runtime in sec.} & \textit{Iter.} & \textit{Runtime in sec.} \\
		\hline
		AS & 2 & 7 & 13.26 & \textbf{6} & \textbf{11.15}\\
		& 3 & - & timeout & - & timeout \\ 
		\hline
		CA & 2 & 8 & 7.27 & \textbf{5} & \textbf{6.25}\\
		& 3 & - & timeout & \textbf{6} & \textbf{14.21}\\
		& 4 & - & timeout & \textbf{7} & \textbf{346.23}\\
		& 5 & - & timeout& -& timeout\\
        \hline
        DR & 2 & 8 & 6.16 & \textbf{7}& \textbf{6.05}\\
        & 3 & 11 & 11.03 & \textbf{9}& \textbf{10.07}\\
        & 4 & 14 & 69.50 & \textbf{11} & \textbf{65.31}\\
		& 5 & - & timeout& -& timeout\\
		\hline
		PL & 1 & \textbf{4} & \textbf{5.59} & \textbf{4} & \textbf{5.59}\\
		& 2 & 5 & 6.08 & \textbf{4} & \textbf{5.85}\\ 
		& 3 & 6 & 8.51 & \textbf{4} & \textbf{6.95}\\
		& 4 & 7 & 20.99 & \textbf{4} & \textbf{12.54}\\
		& 5 & 8 & 87.33 & \textbf{4} & \textbf{41.95} \\
        & 6 & - & timeout & \textbf{4}& \textbf{742.36}\\
        & 7 & - & timeout & - & timeout \\
		\hline
		DW & 1 & 8 & 5.90 & \textbf{7} & \textbf{5.79} \\
		& 2 &  10 & 6.58 & \textbf{9} & \textbf{6.44} \\  
		& 3 &  12 & 7.90 & \textbf{11} & \textbf{7.80} \\  
		& 4 &  14 & 11.45 & \textbf{13} & \textbf{11.22} \\
		& 5 &  16 & \textbf{16.59} & \textbf{15} & 19.82 \\
		& $\ldots$  & $\ldots$ & \ldots & \ldots & \ldots \\ 
		& 10&  26 & \textbf{716.61} & \textbf{25} & 823.94 \\
		& 11 & \textbf{28} & \textbf{1304.14} & - & timeout \\
		& 12 & - & timeout & - & timeout \\
        \hline
        \hline
	\end{tabular}
\end{table}

\subsection{Comparison Framework}

As both the sequential and the true concurrent encoding result in a 2-QBF not in conjunctive normal form, we use the QBF solver \textsc{QuAbS}~\cite{DBLP:conf/sat/Tentrup16,DBLP:journals/corr/Tentrup16}.
The results from Table~\ref{fig:table} were obtained on an Intel i7-2700K CPU with 3.50~GHz and 32~GB RAM and are the average over five runs.
For each benchmark family (column \textit{Ben.}), we report on the attempted parameters of the benchmark (\textit{Par.}), the necessary model checking iterations (\textit{Iter.}) of bounded synthesis, and on the runtime for finding a winning system strategy. 
A timeout of 30 minutes is used.
We prepared an artifact to replicate our experimental results~\cite{HM19}.

\subsection{Observation}

The true concurrent encoding shows considerable improvements over the sequential encoding on the presented benchmark set: It solves more instances and has mostly faster solving times as shown in Table~\ref{fig:table}.
The improvements are based on fewer model checking iterations of the bounded synthesis algorithm witnessed by the \textit{Iter.} column. 
The lower iteration count and runtime are indicated in bold.

We can make the following observations concerning the specific benchmark families:
The complex communication structure of Alarm System prevents larger examples to be synthesized because the alarm system observing the intrusion has to broadcast the information to all other alarm systems.
Similarly, Collision Avoidance has a complex pairwise communication structure which can be better synthesized by the true concurrent encoding.
The simpler communication structure of Production Line allows constant bounds for the true concurrent encoding compared to linearly increasing bounds for the sequential encoding.
The communication structure of Disjoint Routing lays between complex and simple such that the true concurrent encoding enables a smaller linear increase in the bound.  
The true concurrent encoding therefore can solve larger examples even though the bounded unfolding grows with the number of considered players for both encodings.
The possibilities for communication of information are less open in the benchmark families DR and PL whereas they are completely open in the benchmark family AS and CA.
In Document Workflow, the communication structure is fixed to a specific pairwise ring between neighboring clerks.
However, this prevents almost all true concurrency between them.
The difference in bound of one is caused by the concurrent test that all workers have seen the document and that the decisions of workers have been unanimously.

\section{Related Work}\label{sec:relatedwork}

The \emph{control problem of asynchronous automata} is an alternative approach to the synthesis of distributed asynchronous systems with causal memory. 
The modeling with asynchronous automata does not allow the spawning and termination of players.
Also, it does not explicitly represent environment processes. 
Instead, every process can have uncontrollable behavior. 
The decidability of the control problem of asynchronous automata is open in general~\cite{DBLP:conf/icalp/Muscholl15}. 
There are some decidability results for the control problem of asynchronous automata for restrictions on the dependencies of actions~\cite{DBLP:conf/fsttcs/GastinLZ04} or on the synchronization behavior~\cite{DBLP:conf/concur/MadhusudanT02,DBLP:conf/fsttcs/MadhusudanTY05}. 
Decidability has also been obtained for acyclic communication structures~\cite{DBLP:conf/icalp/GenestGMW13,DBLP:conf/fsttcs/MuschollW14}. 
The class of \emph{Decomposable games}~\cite{DBLP:conf/fsttcs/Gimbert17} proposes a new proof technique to unify and extend these results.
Recently, an exponential gap between the control problem of asynchronous automata and Petri games has been identified~\cite{CONCUR19}.

There is a broad theory and several implementations for model checking of distributed systems:
For Petri nets as representation of distributed systems, it often suffices to only consider finite prefixes of the unfolding~\cite{DBLP:journals/acta/KhomenkoKV03,DBLP:series/eatcs/EsparzaH08,DBLP:journals/tcs/BonetHKTV14}.
It is most interesting whether these results can be lifted to Petri games and causal past. 
Partial order reduction and true concurrency have been studied thoroughly to speed up the model checking of finite distributed systems~\cite{DBLP:journals/fuin/Heljanko99,DBLP:phd/basesearch/Heljanko02,DBLP:conf/popl/FlanaganG05,DBLP:conf/wotug/MeulenP09}.
The systems we synthesize are especially powerful as both the system and the environment can run infinitely and non-determinism of the environment is represented.   

\section{Conclusion}\label{sec:Conclusion}

We presented how to utilize concurrency in bounded synthesis for asynchronous distributed systems by firing as many true concurrent transitions as possible in our new true concurrent encoding.
The previous sequential encoding enumerated all interleavings.
For the true concurrent encoding, we represent the decisions of the environment players explicitly as environment strategies for Petri games and showed that this enables us to fire all enabled transitions as early as possible while maintaining the existence of winning system strategies.
The experimental results show that our tool implementation of the true concurrent encoding outperforms the sequential encoding on all benchmark families by a considerable margin.
Even in the rare case of benchmark families without true concurrent transitions, the true concurrent encoding slightly outperforms the sequential encoding despite resulting in larger QBFs.

For future work, we want to apply environment strategies and true concurrency in \textsc{Adam} to improve synthesis for a bounded number of system players and one environment players.
Furthermore, we plan to extend the bounded synthesis encoding further.
On the one hand, we want to identify disconnected parts of the Petri game, solve them in isolation, and compose them back together.
On the other hand, we plan to extend the expressivity of considered winning conditions. 
Local liveness conditions of places to reach should be straightforward whereas global winning conditions in the form of markings to reach or avoid could prove difficult for the true concurrent encoding as certain interleavings may be skipped.
Therefore, we believe that local winning conditions on the progress of individual tokens could be a good middle ground between the current local winning conditions of bad places and global winning conditions.

\bibliographystyle{splncs04} 
\bibliography{bibfile}

\begin{thebibliography}{10}
\providecommand{\url}[1]{\texttt{#1}}
\providecommand{\urlprefix}{URL }
\providecommand{\doi}[1]{https://doi.org/#1}

\bibitem{CONCUR19}
Beutner, R., Finkbeiner, B., Hecking-Harbusch, J.: Translating asynchronous
  games for distributed synthesis. In: Proceedings of {CONCUR}. pp. 26:1--26:16
  (2019)

\bibitem{DBLP:conf/cav/BohyBFJR12}
Bohy, A., Bruy{\`{e}}re, V., Filiot, E., Jin, N., Raskin, J.: Acacia+, a tool
  for {LTL} synthesis. In: Proceedings of {CAV}. pp. 652--657 (2012)

\bibitem{DBLP:journals/tcs/BonetHKTV14}
Bonet, B., Haslum, P., Khomenko, V., Thi{\'{e}}baux, S., Vogler, W.: Recent
  advances in unfolding technique. Theor. Comput. Sci.  \textbf{551},  84--101
  (2014)

\bibitem{DBLP:conf/tacas/Ehlers11}
Ehlers, R.: Unbeast: Symbolic bounded synthesis. In: Proceedings of {TACAS}.
  pp. 272--275 (2011)

\bibitem{DBLP:series/eatcs/EsparzaH08}
Esparza, J., Heljanko, K.: Unfoldings -- {A} Partial-Order Approach to Model
  Checking. Springer (2008)

\bibitem{DBLP:conf/tacas/FaymonvilleFRT17}
Faymonville, P., Finkbeiner, B., Rabe, M.N., Tentrup, L.: Encodings of bounded
  synthesis. In: Proceedings of {TACAS}. pp. 354--370 (2017)

\bibitem{Finkbeiner15}
Finkbeiner, B.: Bounded synthesis for petri games. In: Correct System Design.
  pp. 223--237 (2015)

\bibitem{Jesko}
Finkbeiner, B., Gieseking, M., Hecking{-}Harbusch, J., Olderog, E.: Symbolic
  vs. bounded synthesis for petri games. In: Proceedings of SYNT. pp. 23--43
  (2017)

\bibitem{ADAM}
Finkbeiner, B., Gieseking, M., Olderog, E.: Adam: Causality-based synthesis of
  distributed systems. In: Proceedings of {CAV}. pp. 433--439 (2015)

\bibitem{DBLP:conf/fsttcs/FinkbeinerG17}
Finkbeiner, B., G{\"{o}}lz, P.: Synthesis in distributed environments. In:
  Proceedings of {FSTTCS}. pp. 28:1--28:14 (2017)

\bibitem{petrigamessynthesis}
Finkbeiner, B., Olderog, E.: Petri games: Synthesis of distributed systems with
  causal memory. Inf. Comput.  \textbf{253},  181--203 (2017)

\bibitem{DBLP:conf/lics/FinkbeinerS05}
Finkbeiner, B., Schewe, S.: Uniform distributed synthesis. In: Proceedings of
  {LICS}. pp. 321--330 (2005)

\bibitem{boundedsynthesis}
Finkbeiner, B., Schewe, S.: Bounded synthesis. {STTT}  \textbf{15}(5-6),
  519--539 (2013)

\bibitem{DBLP:conf/popl/FlanaganG05}
Flanagan, C., Godefroid, P.: Dynamic partial-order reduction for model checking
  software. In: Proceedings of {POPL}. pp. 110--121 (2005)

\bibitem{DBLP:conf/fsttcs/GastinLZ04}
Gastin, P., Lerman, B., Zeitoun, M.: Distributed games with causal memory are
  decidable for series-parallel systems. In: Proceedings of {FSTTCS}. pp.
  275--286 (2004)

\bibitem{DBLP:conf/icalp/GenestGMW13}
Genest, B., Gimbert, H., Muscholl, A., Walukiewicz, I.: Asynchronous games over
  tree architectures. In: Proceedings of {ICALP}. pp. 275--286 (2013)

\bibitem{DBLP:conf/fsttcs/Gimbert17}
Gimbert, H.: On the control of asynchronous automata. In: Proceedings of
  {FSTTCS}. pp. 30:1--30:15 (2017)

\bibitem{HM19}
Hecking-Harbusch, J., Metzger, N.O.: {BoundedAdam – Efficient Trace Encodings
  for Bounded Synthesis of Petri Games}  (2019).
  \doi{10.6084/m9.figshare.8313215}

\bibitem{DBLP:journals/corr/Tentrup16}
Hecking{-}Harbusch, J., Tentrup, L.: Solving {QBF} by abstraction. In:
  Proceedings of GandALF. pp. 88--102 (2018)

\bibitem{DBLP:journals/fuin/Heljanko99}
Heljanko, K.: Using logic programs with stable model semantics to solve
  deadlock and reachability problems for 1-safe petri nets. Fundam. Inform.
  \textbf{37}(3),  247--268 (1999)

\bibitem{DBLP:phd/basesearch/Heljanko02}
Heljanko, K.: Combining symbolic and partial order methods for model checking
  1-safe Petri nets. Ph.D. thesis, Aalto University, Helsinki, Finland (2002)

\bibitem{jensen2013coloured}
Jensen, K.: Coloured Petri nets: basic concepts, analysis methods and practical
  use, vol.~2. Springer Science \& Business Media (2013)

\bibitem{DBLP:conf/cav/JobstmannGWB07}
Jobstmann, B., Galler, S.J., Weiglhofer, M., Bloem, R.: Anzu: {A} tool for
  property synthesis. In: Proceedings of {CAV}. pp. 258--262 (2007)

\bibitem{DBLP:journals/acta/KhomenkoKV03}
Khomenko, V., Koutny, M., Vogler, W.: Canonical prefixes of petri net
  unfoldings. Acta Inf.  \textbf{40}(2),  95--118 (2003)

\bibitem{DBLP:conf/concur/MadhusudanT02}
Madhusudan, P., Thiagarajan, P.S.: A decidable class of asynchronous
  distributed controllers. In: Proceedings of {CONCUR}. pp. 145--160 (2002)

\bibitem{DBLP:conf/fsttcs/MadhusudanTY05}
Madhusudan, P., Thiagarajan, P.S., Yang, S.: The {MSO} theory of connectedly
  communicating processes. In: Proceedings of {FSTTCS}. pp. 201--212 (2005)

\bibitem{DBLP:conf/wotug/MeulenP09}
Meulen, J.V., Pecheur, C.: Combining partial order reduction with bounded model
  checking. In: Proceedings of CPA. pp. 29--48 (2009)

\bibitem{DBLP:conf/icalp/Muscholl15}
Muscholl, A.: Automated synthesis of distributed controllers. In: Proceedings
  of {ICALP}. pp. 11--27 (2015)

\bibitem{DBLP:conf/fsttcs/MuschollW14}
Muscholl, A., Walukiewicz, I.: Distributed synthesis for acyclic architectures.
  In: Proceedings of {FSTTCS}. pp. 639--651 (2014)

\bibitem{DBLP:conf/icalp/PnueliR89}
Pnueli, A., Rosner, R.: On the synthesis of an asynchronous reactive module.
  In: Proceedings of {ICALP}. pp. 652--671 (1989)

\bibitem{DBLP:conf/focs/PnueliR90}
Pnueli, A., Rosner, R.: Distributed reactive systems are hard to synthesize.
  In: Proceedings of FOCS. pp. 746--757 (1990)

\bibitem{DBLP:books/sp/Reisig85a}
Reisig, W.: Petri Nets: An Introduction. Springer (1985)

\bibitem{DBLP:conf/sat/Tentrup16}
Tentrup, L.: Non-prenex {QBF} solving using abstraction. In: Proceedings of
  {SAT}. pp. 393--401 (2016)

\end{thebibliography}

\end{document}